\newcommand{\set}[1]{\mathcal{#1}}
\newcommand{\sett}[2]{\set{#1}_{\set{#2}}}
\newcommand{\GF}{\mathrm{GF}}
\newcommand{\vect}[1]{\mathbf{#1}}
\newcommand{\setsize}[1]{|\set{#1}|}
\newcommand{\abs}[1]{\left|{#1}\right|}
\newcommand{\innerproduct}[2]{<#1,#2>}
\newcommand{\operator}[2]{\mathcal{#1}\left\{#2\right\}}
\newcommand{\setts}[3]{\sett{#1}{#2}^{#3}}
\theoremstyle{plain} \newtheorem{theorem}{Theorem}
\theoremstyle{plain} 
\theoremstyle{plain} 
\theoremstyle{plain} \newtheorem{definition}{Definition}
\theoremstyle{remark} 
\theoremstyle{remark} \newtheorem{ex}{Example}
\title{Factorization of Joint Probability Mass Functions into 
Parity Check Interactions}
\author{
\IEEEauthorblockN{Muhammet Fatih Bayramo\~{g}lu and Ali \"{O}zg\"{u}r Y\i lmaz}

\IEEEauthorblockA{Dept. of Electrical and Electronics Eng., Middle East Technical University\\
Email: \{fatih,aoyilmaz\}@eee.metu.edu.tr}

}    
\begin{document}
\maketitle

\begin{abstract}
We show that any joint probability mass function (PMF) 
can be expressed as a product of parity check factors
and factors of degree one with the help 
of some auxiliary variables, if the alphabet size
is appropriate for defining a parity check equation.
In other words, marginalization of a joint PMF
is equivalent to a  soft decoding task as long as a finite 
field can be constructed over the alphabet of the PMF.
In factor graph terminology this claim means that
 a factor graph representing such a joint 
PMF always has an equivalent Tanner graph. We provide 
a systematic method based on the Hilbert space of 
PMFs and orthogonal projections for obtaining this factorization. 
\end{abstract}

\section{Introduction}

Most of the problems faced in communication systems are
in the form of  marginalization of  joint PMFs. If the joint PMF
is in the form of a product of some local functions (factors or interactions) 
then the marginalization task can be accomplished by the sum-product algorithm 
\cite{fgsp,aloefg}. However, the factorization structures of joint PMFs 
are not apparent always. Therefore, a systematic 
method showing the factorization structure of joint PMFs proves useful.

We propose a method for this purpose which is based on the 
Hilbert space of PMFs and orthogonal projections. The Hilbert space 
of PMFs is proposed in our recent work \cite{bayramoglu2008} and has potential
applications one of which is proposed in this paper. 

Our proposed method factorizes joint PMFs into soft parity check interactions
(SPCI). We define an SPCI as a generalized form of parity check constraints. A parity 
check constraint guarantees that the weighted sum of the variables included in the
parity check always equals to zero. However, in SPCIs we allow the 
weighted sum to admit all the  values with certain probabilities. It is shown that 
SPCIs sharing the same set of parity check coefficients form a subspace. Then the 
factorization of   joint PMFs is achieved by projecting them onto 
these subspaces.  

Since our method employs parity checks, 
it is applicable to PMFs of certain alphabet sizes.
 The alphabet
size of the random variables should be a prime number or its powers, for which a finite field 
exist. This may seem 
as a severe restriction. However, in the case of communication problems this restriction
does not cause a big trouble since the alphabet sizes in the communication
problems are either two or its powers usually.

It is known that the soft decoding operation is a special case of the marginalization of 
joint PMFs. In this work we show that the reverse is also true for certain alphabet sizes. 
In other words, we show that marginalization sum can be handled by a soft decoder. This soft 
decoder belongs not to an arbitrary code but to the dual code of the Hamming code.


The paper is organized as follows. In the next section the Hilbert space of PMFs will
be briefly introduced. Third section explains the factorization of 
joint PMFs in detail. 
In the fourth section, we show that the soft decoder
of the dual Hamming code can be employed as a universal marginalization machine. 

\section{The Hilbert Space of PMFs}

The Hilbert space of PMFs is summarized in this section. 
Readers may refer to \cite{bayramoglu2008}
for a more detailed explanation of the Hilbert space of the PMFs.

Consider an experiment with a set of outcomes (alphabet) $\set{A}$ which is discrete and has a finite
number of elements. The probabilities assigned to  these outcomes define a  PMF such that 
$p(x)=Pr\{x\}$ for every $x$ in $\set{A}$. Each different assignment of the probabilities to the
outcomes defines a different PMF. We denote the set of all possible PMFs defined over
the alphabet $\set{A}$ by $\sett{V}{A}$ which is formally defined as
\begin{equation}
  \set{V}_{\set{A}} \triangleq \{p(x):\set{A}\rightarrow [0,1] : \sum_{\forall x \in \set{A}}p(x)=1 \}
 \textrm{.}
\end{equation}

The addition and the scalar multiplication operations are necessary to construct an algebraic structure
over $\sett{V}{A}$. The addition of PMFs is denoted by 
$\boxplus$ and defined as
\begin{equation}
  p(x)\boxplus q(x) \triangleq \frac{1}{Z}p(x)q(x) \textrm{,}
\end{equation} 
 where $p(x)$, $q(x)$ are PMFs in $\sett{V}{A}$ and $Z$ is the normalization constant. The scalar
multiplication is denoted by $\boxdot$ and is given as 
 \begin{equation}
   \alpha \boxdot p(x) \triangleq \frac{1}{Z} (p(x))^{\alpha}
 \end{equation}
  where $\alpha$ is in $\mathbb{R}$ and $Z$ is the normalization constant once again. This normalization 
constant is necessary to ensure the closure of the $\sett{V}{A}$ under the addition
and the scalar multiplication operations. Hence, its  value
is $Z=\sum_{\forall x \in \set{A}}p(x)q(x)$ for the case of addition 
and $Z=\sum_{\forall x\in\set{A}}(p(x))^{\alpha}$ for the case of scalar multiplication. 
 Note that the PMFs are denoted not only by letter $p$ but also by other 
 lower case letters in the paper.

It can be shown that the set $\sett{V}{A}$ together with the operations
$\boxplus$ and $\boxdot$ forms a vector space over $\mathbb{R}$ \cite{bayramoglu2008}.

The geometric structure over this vector space can be defined by  means of an 
inner product. This vector space admits the following function as an inner product 
\cite{bayramoglu2008}.
\begin{multline}
  \innerproduct{p(x)}{q(x)}: \sett{V}{A}\times\sett{V}{A} \rightarrow \mathbb{R} 
  \triangleq \\\sum_{\forall x\in\set{A}}  
\left(\log\frac{(p(x))^{\setsize{A}}}{\prod_{\forall y\in\set{A}}p(y)} 
\log \frac{(q(x))^{\setsize{A}}}{\prod_{\forall y\in\set{A}}q(y)}\right) 
\end{multline} 
where $\setsize{A}$ denotes the cardinality of the set $\set{A}$.
This definition can be simplified by introducing the following mapping.
\begin{equation}
  \operator{L}{p(x)}:\sett{V}{A}\rightarrow \mathbb{R}^{\setsize{A}} \triangleq
  \sum_{i=0}^{\setsize{A}-1}\left(\log\frac{(p(x_i))^{\setsize{A}}}{\prod_{\forall y\in\set{A}}p(y)}\right) \vect{e}_i 
  \label{operatordef}
\end{equation}
where $x_i$ denotes the $i^{th}$ element of the set $\set{A}$ and $\vect{e}_i$ denotes the
$i^{th}$ canonical basis vector of $\mathbb{R}^{\setsize{A}}$. Then the inner product of 
PMFs simply becomes 
\begin{equation}
  \innerproduct{p(x)}{q(x)} = \sum_{i=0}^{\setsize{A}-1} (\vect{p})_{i} (\vect{q})_i 
  = \innerproduct{\vect{p}}{\vect{q}} 
  \label{operatorornek}
\end{equation}
where $\vect{p}$, $\vect{q}$ are vectors in $\mathbb{R}^{\setsize{A}}$ such that 
  $\vect{p}=\operator{L}{p(x)}$, $\vect{q}=\operator{L}{q(x)}$, and $(\vect{p})_i$
($(\vect{q})_i$)  
denotes the $i^{th}$ component of the vector $\vect{p}$ ($\vect{q}$). This 
identity shows that $\operator{L}{.}$ is an isometric transformation from $\sett{V}{A}$
to $\mathbb{R}^{\setsize{A}}$. 

The mapping $\operator{L}{.}$ have further important properties.
It is linear and one-to-one \cite{bayramoglu2008}. 
These properties allow us to find the dimension of 
the vector space $\sett{V}{A}$.  
The dimension of $\sett{V}{A}$ is not very simple to calculate; 
whereas, the dimension of the range space of the $\operator{L}{.}$ is. 
For any $p(x) \in \sett{V}{A}$, let $\vect{p}=\operator{L}{p(x)}$ then
\begin{equation}
  \sum_{i=0}^{\setsize{A}-1}(\vect{p})_i = \sum_{i=0}^{\setsize{A}-1} 
\log\frac{(p(x_i))^{\setsize{A}}}{\prod_{\forall y\in\set{A}}p(y)} = 0  \textrm{.}  \label{redundancy} 
\end{equation}
Therefore, the range space of $\operator{L}{.}$ becomes the set 
$ \left\{ \vect{p} \in \mathbb{R}^{\setsize{A}} : (1,1,\ldots1)\vect{p}=0 \right\} $,
which is clearly a $\setsize{A}-1$ dimensional subspace of $\mathbb{R}^{\setsize{A}}$.
Hence, $\sett{V}{A}$ is a $\setsize{A}-1$ dimensional vector  space. Moreover,
$\sett{V}{A}$ is a Hilbert space since it is a finite dimensional inner product space.

\subsection{The Hilbert Space of Joint PMFs\label{hsjpmfs}}

The Hilbert space structure can be applied to  joint PMFs of combined experiments as long
as each individual experiments has a finite alphabet. 
Consider a combined experiment consisting of $N$ individual discrete experiments with 
alphabets $\set{A}_1,\set{A}_2,\ldots,\set{A}_{N}$. Then the alphabet of the 
combined experiment, which is denoted by $\set{S}$, is
\begin{equation}
\set{S}=\set{A}_1\times\set{A}_2\times\ldots \times\set{A}_{N} \nonumber \textrm{.}
\end{equation}
Hence, the alphabet size of the combined experiment is 
$\setsize{S}=\prod_{i=1}^{N}\abs{\set{A}_i}$. Consequently, the dimension of 
this Hilbert space is
\begin{equation}
  \dim \sett{V}{S} =   \prod_{i=1}^{N}\abs{\set{A}_i} -1 \textrm{.} 
\end{equation}
If all of the individual experiments are defined over the same alphabet denoted
by $\set{A}$ then $ \dim \set{V}_{\set{S}} = \setsize{A}^{N}-1$.

\section{Factorization of Joint PMFs}

In this section the factorization of joint PMFs is analyzed in a systematic way.
Let the joint PMF under concern be $p(x_1,x_2,\ldots,x_N)$ which is an element of 
$\sett{V}{S}$ as defined in the previous section. Suppose that this joint PMF can be
expressed as
\begin{equation}
  p(x_1,x_2,\ldots,x_N)=\prod_{i=1}^{K}\phi_i(\set{X}_i) \label{factor1}
\end{equation}
where $\set{X}_i$'s are the subsets of the set $\set{X}=\{x_1,x_2,\ldots,x_N\}$
and the arguments of the functions $\phi_i(\set{X}_i)$ are the elements of $\set{X}_i$.  
The functions $\phi_i(\set{X}_i)$'s are called factor functions or interactions. 

The factor functions  are not necessarily  PMFs in general. However, 
a proper PMF can be defined for each factor function by properly scaling them as follows.
\begin{equation}
  q_i(x_1,x_2,\ldots,x_N)=\frac{\phi_{i}(\set{X}_i)}
  {\sum_{\forall(\set{X}_i)}\phi_{i}(\set{X}_i)} \nonumber \textrm{.}
\end{equation} 

Although $q_i$ has all $x_1,x_2,\ldots,x_N$ as arguments in this notation, its value is independent of 
the arguments in $\set{X}\setminus\set{X}_i$ and  it is still a function of the
members of $\set{X}_i$ only. After this scaling (\ref{factor1})
can be rewritten as
\begin{equation}
  p(x_1,x_2,\ldots,x_N)=\frac{1}{Z}\prod_{i=1}^{K}q_i(x_1,x_2,\ldots,x_N) \label{factor2} \textrm{.}
\end{equation}

Note that $p(x_1,x_2,\ldots,x_N)$ and $q_i(x_1,x_2,\ldots,x_N)$s are all members of the 
Hilbert space $\sett{V}{S}$, and the representation of (\ref{factor2})
in this Hilbert space is
\begin{equation}
  p(x_1,x_2,\ldots,x_N) = \boxplus_{i=1}^{K} q_i(x_1,x_2,\ldots,x_N) \textrm{.}
\end{equation}

\subsection{Soft Parity Check Interactions}

A random variable is defined as a mapping from the event space to the real line.
This is also true for discrete experiments as well. However, if the number of outcomes
of the discrete  experiment is appropriate, defining a discrete random variable
as a mapping from event space to a Galois field may inspire new ideas. 
This section is built on this idea. 
Therefore, in the rest of the paper it is assumed that it is possible
to make a one-to-one matching between the event space and a Galois field. In other 
words, we assume that 
\begin{equation}
\set{A}=\GF(\setsize{A}) \textrm{,}
\end{equation} 
where $\GF(\setsize{A})$ denotes  the Galois field of order 
$\setsize{A}$. Furthermore, it is assumed
that combined experiments consist of individual experiments with identical event spaces. 
That is, 
\begin{equation}
  \set{S}=\set{A}^{N}=\GF^{N}(\setsize{A}) \nonumber \textrm{.}
 \end{equation}

Working on Galois fields allows us  to define 
interactions (factor functions, joint PMFs) based on algebraic operations.
An example for such an interaction is the soft parity check interaction (SPCI).
We define SPCI as follows.

\begin{definition}\textbf{Soft Parity Check Interaction:} A joint PMF 
$p(x_1,x_2,\ldots,x_N)$, in 
$\sett{V}{S}$, where $\set{S}=\GF^{N}(\setsize{A})$, is called a soft parity
check interaction if there exists a $q(x) \in \set{V}_{\GF(\setsize{A})}$ and 
a vector $\vect{a}=(a_1,a_2,\ldots,a_N) \in \GF^{N}(\setsize{A})$ such that
\begin{equation}
  p(\vect{x})= \frac{1}{\setsize{A}^{N-1}} q(\vect{a}\vect{x}^{T}) \nonumber \textrm{,}
\end{equation}
 where $\vect{x}$ denotes $(x_1,x_2,\ldots,x_N)$ and $^{T}$ denotes transposition.
Moreover, the 
vector $\vect{a}$ is called the \textbf{parity check coefficient vector} of the SPCI
and the weight of this vector is called the \textbf{order} of the SPCI $p(\vect{x})$.
\end{definition}

 As its name implies, an SPCI, relates the random variables
 by a parity check equation. The term ``soft'' arises from the fact that the parity check equation
is not guaranteed to be satisfied. That is, the weighted sum of the random variables
has a probability distribution rather than being guaranteed to be zero.

\begin{ex} Let $p_1(x_1,x_2)$ and $p_2(x_1,x_2)$ be two PMFs which are given, with a slight abuse
of notation, as
\begin{eqnarray}
  p_1(x_1,x_2)&=&\begin{bmatrix}
0.2 & 0.1 & 0.1/3 \\
0.1/3 & 0.2 & 0.1  \\
0.1& 0.1/3 &0.2 \\
\end{bmatrix} \nonumber \\
  p_2(x_1,x_2)&=&\frac{1}{238}\begin{bmatrix}
144 & 18 & 6 \\
3 & 18 & 36 \\
3  & 4 & 6
\end{bmatrix}\nonumber 
\end{eqnarray}
where $i^{th}$ row and $j^{th}$ column of the matrices represent the value of $p_{1,2}(x_1=i-1,x_2=j-1)$.
In this example $p_1(x_1,x_2)=(1/3) q(x_1+2x_2)$ where $q(x)=[0.6\ 0.1\ 0.3]$ with a similar abuse 
of notation.  Hence $p_1(x_1,x_2)$ is an SPCI. On the other hand $p_2(x_1,x_2)$ is not 
an SPCI since such an expression is not possible for it. 
\end{ex}

The SPCIs have some important properties. Firstly, the marginal functions associated with an 
SPCI will be investigated. If the order of the SPCI is 
one then the $i^{th}$ marginal function becomes
\begin{equation}
  \sum_{\forall (\set{X}\setminus x_{i})} \frac{1}{\setsize{A}^{N-1}}q(\vect{a}\vect{x}^{T})
  =\left\{\begin{array}{ccl}q(a_i x_i)&\textrm{,} & a_i\neq 0  \\
    \frac{1}{\setsize{A}}&\textrm{,} & \textrm{otherwise} \end{array} \right. 
\nonumber \textrm{.}
\end{equation} 
In other words, SPCIs of order one provide
local evidence about the variable whose associated coefficient is nonzero. If the order is greater
 than one then 
\begin{equation}
  \sum_{\forall (\set{X}\setminus x_{i})} \frac{1}{\setsize{A}^{N-1}}q(\vect{a}\vect{x}^{T})
  =\frac{1}{\setsize{A}}
\end{equation}
for all $i\in \{1,2,\ldots,N\}$, which means SPCIs of order greater than one do not
provide any local evidence. However, these SPCIs provide information when used together
with other SPCIs. Hence, we say that SPCIs of order greater than one provide purely extrinsic 
information.  



 
 Secondly, in a sum-product algorithm point of view, message computation for 
SPCIs is less complex. In general, for a factor function in $\sett{V}{S}$, 
the message computation complexity is $\setsize{A}^{N}$ \cite{fgsp}. 
The reduced complexity message computation algorithm for low-density parity-check 
decoding presented in \cite{reducedcomplexity} is directly applicable to 
SPCIs as well. Hence, message computation for an SPCI is $N\setsize{A}\log \setsize{A}$.


Finally and the most importantly, the set of SPCIs sharing the same
parity check coefficients, as stated by Theorem \ref{ilk}, is a \emph{subspace}
of $\sett{V}{S}$. The set of SPCIs with the parity check coefficient 
vector $\vect{a}$ is denoted by $\setts{V}{S}{\vect{a}}$ and defined
as follows.
\begin{equation}
\setts{V}{S}{\vect{a}}=\left\{p(\vect{x})=\frac{1}{\setsize{A}^{N-1}}
  q(\vect{a}\vect{x}^{T}):q(x) \in \set{V}_{\GF(\setsize{A})}  \right\}  \nonumber 
\end{equation}

\begin{theorem}
\label{ilk}
For any nonzero $\vect{a}$ in $\GF^{N}(\setsize{A})$,  $\setts{V}{S}{\vect{a}}$ 
is a $\setsize{A}-1$ dimensional subspace of $\sett{V}{S}$. 
\end{theorem}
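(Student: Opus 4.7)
The plan is to exhibit a linear isomorphism between $\set{V}_{\GF(\setsize{A})}$ and $\setts{V}{S}{\vect{a}}$, and then invoke the known dimension of the former. Concretely, I would define the map
\begin{equation}
\Phi: \set{V}_{\GF(\setsize{A})} \rightarrow \sett{V}{S}, \qquad
\Phi(q)(\vect{x}) = \frac{1}{\setsize{A}^{N-1}} q(\vect{a}\vect{x}^{T}),
\end{equation}
and verify three things: (i) $\Phi$ lands in $\sett{V}{S}$, (ii) $\Phi$ is linear with respect to $\boxplus$ and $\boxdot$, so that its image is a subspace, and (iii) $\Phi$ is injective with image exactly $\setts{V}{S}{\vect{a}}$.

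For (i), the key observation is that since $\vect{a}\neq \vect{0}$, the $\GF(\setsize{A})$-linear functional $\vect{x}\mapsto \vect{a}\vect{x}^{T}$ is surjective onto $\GF(\setsize{A})$ with kernel of size $\setsize{A}^{N-1}$. Consequently, for every $s\in \GF(\setsize{A})$ there are exactly $\setsize{A}^{N-1}$ preimages $\vect{x}\in\set{S}$, whence $\sum_{\vect{x}\in\set{S}} \Phi(q)(\vect{x}) = \sum_{s\in \GF(\setsize{A})} q(s) = 1$, so $\Phi(q)\in\sett{V}{S}$. By construction the image is $\setts{V}{S}{\vect{a}}$.

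For (ii), I would directly compute both sides of $\Phi(q_1\boxplus q_2) = \Phi(q_1)\boxplus \Phi(q_2)$ and of $\Phi(\alpha\boxdot q) = \alpha\boxdot \Phi(q)$. This is where the main bookkeeping lives: one has to keep track of two different normalization constants (the $\setsize{A}^{N-1}$ in the definition of $\Phi$ and the $Z$'s arising from $\boxplus$, $\boxdot$). The clean way to handle it is to compute the normalization $Z$ for, say, $\Phi(q_1)\boxplus \Phi(q_2)$ by again partitioning $\set{S}$ by the value of $\vect{a}\vect{x}^{T}$, which reduces the $\vect{x}$-sum to a sum over $s\in\GF(\setsize{A})$ with a factor $\setsize{A}^{N-1}$ that exactly cancels the prefactor, leaving the normalization of $q_1\boxplus q_2$. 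An entirely analogous fiber-counting argument handles $\boxdot$. I expect this normalization-tracking to be the main (and only real) obstacle; everything else is formal.

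For (iii), injectivity is immediate from surjectivity of $\vect{x}\mapsto \vect{a}\vect{x}^{T}$: if $\Phi(q_1)=\Phi(q_2)$ then $q_1(s)=q_2(s)$ for every $s$ in the range, which is all of $\GF(\setsize{A})$. Combining linearity, injectivity, and the identification of the image, $\Phi$ is a vector-space isomorphism from $\set{V}_{\GF(\setsize{A})}$ onto $\setts{V}{S}{\vect{a}}$, so the latter is a subspace of $\sett{V}{S}$ of dimension $\dim \set{V}_{\GF(\setsize{A})} = \setsize{A}-1$, as established in the preceding section.
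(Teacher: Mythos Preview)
Your proposal is correct and follows essentially the same strategy as the paper's proof: define the map $\Phi$ (which the paper calls $\mathcal{T}_{\vect{a}}$), observe that it is linear and injective, and conclude that its range $\setts{V}{S}{\vect{a}}$ is a subspace of dimension $\dim \set{V}_{\GF(\setsize{A})} = \setsize{A}-1$. The paper simply asserts linearity and injectivity without elaboration, whereas you supply the fiber-counting and normalization-tracking details that justify those assertions.
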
  

\begin{proof}For each $\vect{a}$, we can define the following mapping.
\begin{equation}
  \mathcal{T}_{\vect{a}}\operator{}{q(x)}:\set{V}_{\GF(\setsize{A})}\rightarrow \sett{V}{S} \triangleq
  \frac{1}{\setsize{A}^{N-1}}   q(\vect{a}\vect{x}^{T}) \nonumber
\end{equation}
Clearly this mapping is one-to-one and it can be easily shown that it is also linear.
It is well known from linear algebra that the range space of a linear mapping 
is a subspace of the co-domain. 
Moreover, if the mapping is one-to-one the 
dimension of the range space is equal to the dimension of the domain of the 
mapping. Hence, 
\begin{equation}
  \dim \setts{V}{S}{\vect{a}} = \dim \set{V}_{\GF(\setsize{A})} = \setsize{A}-1
\end{equation} 
\end{proof}

Now the relations between two different subspaces defined by
two different parity check coefficient vectors can be investigated. 
These relations are explained by the following theorems.

\begin{theorem}
For any two nonzero parity check coefficient vectors $\vect{a}$ and $\vect{b}$ 
in $\GF^{N}(\setsize{A})$,  $\setts{V}{S}{\vect{a}}=\setts{V}{S}{\vect{b}}$ if  
    $\vect{a}=\alpha \vect{b}$ for an $\alpha$ in $\GF(\setsize{A})$.
\end{theorem}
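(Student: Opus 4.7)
The plan is to show the two subspaces are equal by verifying the two inclusions. Since the statement concerns only one direction (the hypothesis implies the equality), I only need to use the existence of $\alpha$ with $\vect{a}=\alpha\vect{b}$ and observe that this $\alpha$ must in fact be a nonzero element of $\GF(\setsize{A})$: otherwise $\vect{a}$ would be the zero vector, contradicting the nonzero hypothesis.

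The first step is the forward inclusion $\setts{V}{S}{\vect{a}}\subseteq\setts{V}{S}{\vect{b}}$. I would take an arbitrary $p(\vect{x})\in\setts{V}{S}{\vect{a}}$, write it as $p(\vect{x})=\frac{1}{\setsize{A}^{N-1}}q(\vect{a}\vect{x}^T)$ for some $q\in\set{V}_{\GF(\setsize{A})}$, and then substitute $\vect{a}=\alpha\vect{b}$ to obtain
\begin{equation}
p(\vect{x})=\frac{1}{\setsize{A}^{N-1}}q(\alpha\,\vect{b}\vect{x}^T)=\frac{1}{\setsize{A}^{N-1}}q'(\vect{b}\vect{x}^T),\nonumber
\end{equation}
where $q'(y)\triangleq q(\alpha y)$. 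The key small lemma embedded here is that $q'$ is itself a valid PMF on $\GF(\setsize{A})$: multiplication by $\alpha\neq 0$ is a bijection of $\GF(\setsize{A})$, so $q'$ is just a permutation of the values of $q$ and still sums to $1$. This shows $p\in\setts{V}{S}{\vect{b}}$.

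The reverse inclusion $\setts{V}{S}{\vect{b}}\subseteq\setts{V}{S}{\vect{a}}$ then follows by symmetry: since $\alpha\neq 0$ in the field, we can invert the relation to get $\vect{b}=\alpha^{-1}\vect{a}$, and repeat the argument with the roles of $\vect{a}$ and $\vect{b}$ swapped and $\alpha$ replaced by $\alpha^{-1}$.

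There is no real obstacle here; the only point to be careful about is the observation that $\alpha$ is forced to be nonzero (which is what makes the substitution $y\mapsto\alpha y$ a bijection and hence lets $q'$ remain a PMF). Once that is noted, the proof is essentially a one-line change of variables on the argument of $q$, using only the definition of $\setts{V}{S}{\vect{a}}$ and the fact that $\GF(\setsize{A})^{*}$ acts by bijections on $\GF(\setsize{A})$.
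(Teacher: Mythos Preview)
Your proof is correct and follows essentially the same approach as the paper: take $p(\vect{x})=\frac{1}{\setsize{A}^{N-1}}q(\vect{a}\vect{x}^{T})$, substitute $\vect{a}=\alpha\vect{b}$, and observe that $q'(y)=q(\alpha y)$ is again in $\set{V}_{\GF(\setsize{A})}$. You are in fact a bit more careful than the paper, which leaves the observation that $\alpha\neq 0$ and the reverse inclusion implicit.
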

\begin{proof} For any 
$p(\vect{x})$ in $\setts{V}{S}{\vect{a}}$ there exist  a $q_1(x)$ in $\set{V}_{\GF(\setsize{A})}$
such that $p(\vect{x})=q_1(\vect{a}\vect{x}^{T})$. Let $q_2(x)=q_1(\alpha x)$. Clearly $q_2(x)$ is in
$\set{V}_{\GF(\setsize{A})}$. Then,
\begin{equation}
  p(\vect{x})=\frac{1}{\setsize{A}^{N-1}}q_1(\alpha \vect{b} \vect{x}^{T}) 
  =\frac{1}{\setsize{A}^{N-1}}q_2(\vect{b}\vect{x}^{T}) \nonumber \textrm{.}
\end{equation}
Therefore, $p(\vect{x})$ is also an element of $\setts{V}{S}{\vect{b}}$. Hence, 
\begin{equation}
  \setts{V}{S}{\vect{a}}=\setts{V}{S}{\vect{b}} \nonumber \textrm{,}
\end{equation}
if $\vect{a}=\alpha \vect{b}$.
\end{proof}

\begin{theorem}
\label{dort}
For any two nonzero parity check coefficient vectors $\vect{a}$ and $\vect{b}$ 
in $\GF^{N}(\setsize{A})$,  the subspace $\setts{V}{S}{\vect{a}}$ is orthogonal to 
the subspace $\setts{V}{S}{\vect{b}}$  if $\vect{a}\neq\alpha \vect{b}$ for any 
$\alpha$ in $\GF(\setsize{A})$.
\end{theorem}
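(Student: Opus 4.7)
The plan is to use the isometry $\operator{L}{\cdot}$ from the Hilbert space of PMFs to $\mathbb{R}^{|\set{S}|}$ and reduce the inner product to a Euclidean sum indexed by $\vect{x}\in\set{S}$. Take arbitrary $p_1(\vect{x})=\frac{1}{\setsize{A}^{N-1}}q_1(\vect{a}\vect{x}^T)\in\setts{V}{S}{\vect{a}}$ and $p_2(\vect{x})=\frac{1}{\setsize{A}^{N-1}}q_2(\vect{b}\vect{x}^T)\in\setts{V}{S}{\vect{b}}$ and apply the definition (\ref{operatordef}). The coordinate of $\operator{L}{p_i(\vect{x})}$ at $\vect{x}$ is $\setsize{A}^N\log q_i(\vect{c}_i\vect{x}^T)-\setsize{A}^{N-1}\log Q_i$, where $\vect{c}_1=\vect{a}$, $\vect{c}_2=\vect{b}$, and $Q_i=\prod_{s\in\set{A}}q_i(s)$; this uses that $\log\setsize{A}^{N-1}$ is independent of $\vect{x}$ and thus contributes only through the normalization term.

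Next I would exploit the linear-independence hypothesis. Since $\vect{a}$ and $\vect{b}$ are nonzero and $\vect{a}\neq\alpha\vect{b}$ for every $\alpha\in\GF(\setsize{A})$, the row vectors $\vect{a},\vect{b}$ span a two-dimensional subspace of $\GF^{N}(\setsize{A})$, so the linear map $\vect{x}\mapsto(\vect{a}\vect{x}^T,\vect{b}\vect{x}^T)$ is surjective from $\GF^{N}(\setsize{A})$ onto $\GF^{2}(\setsize{A})$ and every fiber has size $\setsize{A}^{N-2}$. In particular, for any function $h$ on $\GF^{2}(\setsize{A})$,
\begin{equation}
\sum_{\vect{x}\in\set{S}}h(\vect{a}\vect{x}^T,\vect{b}\vect{x}^T)=\setsize{A}^{N-2}\!\!\sum_{(s,t)\in\GF^{2}(\setsize{A})}\!\!h(s,t).\nonumber
\end{equation}
A singly-indexed version of the same counting (using only that $\vect{a}$ is nonzero) gives $\sum_{\vect{x}}\log q_1(\vect{a}\vect{x}^T)=\setsize{A}^{N-1}\log Q_1$ and the analogous identity for $q_2$.

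Then I would expand
\begin{equation}
\innerproduct{p_1(\vect{x})}{p_2(\vect{x})}=\sum_{\vect{x}\in\set{S}}\bigl(\setsize{A}^N\log q_1(\vect{a}\vect{x}^T)-\setsize{A}^{N-1}\log Q_1\bigr)\bigl(\setsize{A}^N\log q_2(\vect{b}\vect{x}^T)-\setsize{A}^{N-1}\log Q_2\bigr)\nonumber
\end{equation}
into four sums. The two cross terms each evaluate to $\setsize{A}^{3N-2}\log Q_1\log Q_2$ via the single-index identity, and the constant term evaluates to $\setsize{A}^{3N-2}\log Q_1\log Q_2$ using $|\set{S}|=\setsize{A}^N$. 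The first term, using the two-index count, becomes $\setsize{A}^{2N}\cdot\setsize{A}^{N-2}(\sum_s\log q_1(s))(\sum_t\log q_2(t))=\setsize{A}^{3N-2}\log Q_1\log Q_2$. The four contributions combine with signs $+\,-\,-\,+$ to give zero, proving orthogonality.

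The calculation itself is just bookkeeping; the conceptual step that actually uses the hypothesis is the counting statement that $(\vect{a}\vect{x}^T,\vect{b}\vect{x}^T)$ is equidistributed over $\GF^{2}(\setsize{A})$, which in turn is equivalent to $\vect{a}$ and $\vect{b}$ being linearly independent over $\GF(\setsize{A})$ (i.e., the negation of $\vect{a}=\alpha\vect{b}$). That equidistribution is precisely what forces the ``cross'' contribution to factor as $\log Q_1\log Q_2$ and cancel the boundary terms, so it is the only place where care is required; everything else is formal manipulation of the isometric coordinates provided by $\operator{L}{\cdot}$.
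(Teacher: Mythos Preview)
Your proof is correct and follows essentially the same route as the paper: both reduce the Hilbert-space inner product to a Euclidean sum via $\operator{L}{\cdot}$, and both hinge on the counting fact that linear independence of $\vect{a},\vect{b}$ forces the map $\vect{x}\mapsto(\vect{a}\vect{x}^T,\vect{b}\vect{x}^T)$ to have uniform fibers of size $\setsize{A}^{N-2}$. The only cosmetic difference is that the paper keeps the centered coordinates $(\vect{q}_i)_s=\setsize{A}\log q_i(s)-\log Q_i$ bundled (so the vanishing follows in one line from $\sum_s(\vect{q}_i)_s=0$), whereas you expand them into $\log q_i$ and $\log Q_i$ and verify the four-term cancellation by hand.
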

\begin{proof}
For any $p_1(\vect{x})\in \setts{V}{S}{\vect{a}}$ and $p_2(\vect{x})\in \setts{V}{S}{\vect{b}}$,
the inner product of these two SPCIs is
\begin{multline}
  \innerproduct{p_1(\vect{x})}{p_2(\vect{x})}=\\\sum_{\forall \vect{x}}
\left(\log\frac{(p_1(\vect{x}))^{(\setsize{A}^{N})}}{\prod_{\forall \vect{y}}p_1(\vect{y})}
\log\frac{(p_2(\vect{x}))^{(\setsize{A}^{N})}}{\prod_{\forall \vect{y}}p_2(\vect{y})}\right) \nonumber
\textrm{.}
\end{multline}
Let $q_1(\vect{a}\vect{x}^{T})=\setsize{A}^{N-1}p_1(\vect{x})$ and
$q_2(\vect{b}\vect{x}^{T})=\setsize{A}^{N-1}p_2(\vect{x})$. Then the inner product can be
rewritten as
\begin{multline}
  \innerproduct{p_1(\vect{x})}{p_2(\vect{x})}=\\\sum_{\forall \vect{x}}
\left(\log\frac{(q_1(\vect{a}\vect{x}^{T}))^{(\setsize{A}^{N})}}
  {\prod_{\forall \vect{y}}q_1(\vect{a}\vect{y}^{T})}
\log\frac{(q_2(\vect{b}\vect{x}^{T}))^{(\setsize{A}^{N})}}
{\prod_{\forall \vect{y}}q_2(\vect{b}\vect{y}^{T})}\right) \nonumber \textrm{.}
\end{multline}

In order to simplify the notation we can use operator $\operator{L}{.}$. Let
$\vect{q}_1=\operator{L}{q_1(x)}$ and $\vect{q}_2=\operator{L}{q_2(x)}$. Then the 
inner product can be simplified as
\begin{equation}
  \innerproduct{p_1(\vect{x})}{p_2(\vect{x})} = \setsize{A}^{2N-2}
  \sum_{\forall \vect{x}}(\vect{q}_1)_{\vect{a}\vect{x}^{T}}
  (\vect{q}_2)_{\vect{b}\vect{x}^{T}} \nonumber  \textrm{,}
\end{equation}
where  the constant $\setsize{A}^{2N-2}$ arises from the differences 
between the alphabet sizes of $\set{S}$ and $\GF(\setsize{A})$. 
Then, for some dummy variables $c_1,c_2$ in $\GF(\setsize{A})$ the summation above can be grouped as
follows. 
\begin{eqnarray}
 \frac{\innerproduct{p_1(\vect{x})}{p_2(\vect{x})}}{\setsize{A}^{2N-2}} &=&
\sum_{\forall c_1}\quad \sum_{\forall c_2} \quad \sum_{\forall \vect{x} \in \set{K}} 
  (\vect{q}_1)_{c_1}(\vect{q}_2)_{c_2}   \nonumber \\
&=& \sum_{\forall c_1}\left( (\vect{q}_1)_{c_1} \sum_{\forall c_2} \left((\vect{q}_2)_{c_2}
\sum_{\forall \vect{x} \in \set{K}} 1 \right)\right)\nonumber \\
&=& \sum_{\forall c_1}\left( (\vect{q}_1)_{c_1} \sum_{\forall c_2} (\vect{q}_2)_{c_2}
\setsize{K} \right) \nonumber
\end{eqnarray}
where $\set{K}=\left\{\vect{x} \in \GF^{N}(\setsize{A}): \vect{a}\vect{x}^{T}=c_1 \wedge 
\vect{b}\vect{x}^{T}=c_2  \right\}$. If $\vect{a}$ was equal to $\alpha \vect{b}$ then
there were either $\setsize{A}^{N-1}$ or no $\vect{x}$ vectors satisfying 
the conditions of set $\set{K}$ depending on the values of $c_1$ and $c_2$. However, 
since $\vect{a}$ is not a scaled version of $\vect{b}$ there are always $\setsize{A}^{N-2}$
elements in $\set{K}$ \emph{regardless} of the values of $c_1$ and $c_2$. Hence, the inner
product becomes
\begin{eqnarray}
   \innerproduct{p_1(\vect{x})}{p_2(\vect{x})} &=& \setsize{A}^{3N-4}
   \left(\sum_{\forall c_1} (\vect{q}_1)_{c_1}\right)
\left(\sum_{\forall c_2} (\vect{q}_2)_{c_2}\right)\nonumber \\
&=&0 \nonumber \textrm{,}
\end{eqnarray} 
where the last line follows from (\ref{redundancy}). Finally, the subspace
$\setts{V}{S}{\vect{a}}$ is orthogonal to $\setts{V}{S}{\vect{b}}$
since any $p_1(\vect{x})$
in $\setts{V}{S}{\vect{a}}$ is orthogonal to any $p_{2}(\vect{x})$ in $\setts{V}{S}{\vect{b}}$.
\end{proof}

The next question to be asked after Theorem \ref{dort} is what the number of different  
subspaces is. This question
 is equivalent to asking the number of distinct vectors in $\GF^{N}(\setsize{A})$
such that every pair of vectors are linearly independent. Note that 
 the answer to this question is equal to the number of columns of a parity check
matrix of a Hamming code defined over $\GF(\setsize{A})$ having $N$ rows.
As explained in \cite{blahut}, the number of  distinct vectors in $\GF^{N}(\setsize{A})$
which are pairwise linearly independent is $\frac{\setsize{A}^{N-1}}{\setsize{A}-1}$ and
so is the number of distinct subspaces. Then we can state the following theorem.

\begin{theorem}\label{last}
  Let $\vect{a}_1,\vect{a}_2,\ldots,\vect{a}_M$ be pairwise linearly independent
 vectors in $\GF^{N}(\setsize{A})$ where
$M=\frac{\setsize{A}^{N-1}}{\setsize{A}-1}$. Then the orthogonal direct sum of the 
subspaces $\setts{V}{S}{\vect{a}_1},\setts{V}{S}{\vect{a}_2},\ldots,\setts{V}{S}{\vect{a}_M}$
is equal to $\sett{V}{S}$. In other words
\begin{equation}
\sett{V}{S}=\bigoplus_{i=1}^{M}\setts{V}{S}{\vect{a}_i} \textrm{.}
\end{equation}
\end{theorem}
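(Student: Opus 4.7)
The proof I would propose is essentially a dimension count, using the orthogonality result of Theorem~\ref{dort} together with the dimension formula in Theorem~\ref{ilk} and the known dimension of the ambient Hilbert space.

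First, I would note that since the vectors $\vect{a}_1,\ldots,\vect{a}_M$ are pairwise linearly independent over $\GF(\setsize{A})$, Theorem~\ref{dort} applies to every pair $(\vect{a}_i,\vect{a}_j)$ with $i\neq j$, so the subspaces $\setts{V}{S}{\vect{a}_i}$ and $\setts{V}{S}{\vect{a}_j}$ are mutually orthogonal in $\sett{V}{S}$. Pairwise orthogonality of subspaces in a Hilbert space implies that their intersections are trivial and that their sum is an (internal) orthogonal direct sum, so the expression $\bigoplus_{i=1}^{M}\setts{V}{S}{\vect{a}_i}$ is well-defined as a subspace of $\sett{V}{S}$ and its dimension equals the sum of the dimensions of the summands.

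Next, by Theorem~\ref{ilk}, each $\setts{V}{S}{\vect{a}_i}$ has dimension $\setsize{A}-1$. Therefore
\begin{equation}
\dim\bigoplus_{i=1}^{M}\setts{V}{S}{\vect{a}_i} = M(\setsize{A}-1) = \frac{\setsize{A}^{N}-1}{\setsize{A}-1}\,(\setsize{A}-1) = \setsize{A}^{N}-1. \nonumber
\end{equation}
From Section~\ref{hsjpmfs}, the ambient space $\sett{V}{S}$ with $\set{S}=\GF^{N}(\setsize{A})$ has dimension exactly $\setsize{A}^{N}-1$. Hence the orthogonal direct sum is a subspace of $\sett{V}{S}$ of the same (finite) dimension as $\sett{V}{S}$ itself, and so it coincides with $\sett{V}{S}$, which is the claim.

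The only real step requiring care is the justification that pairwise orthogonality is enough to conclude both that the sum is direct and that dimensions add; this is a standard fact in finite-dimensional inner-product spaces, but worth a sentence. Counting the number $M=\frac{\setsize{A}^{N}-1}{\setsize{A}-1}$ of pairwise linearly independent directions in $\GF^{N}(\setsize{A})$ — already done in the paragraph preceding the theorem via the Hamming-code parity-check column count — is what makes the dimensions match exactly, so no residual subspace is left uncovered. I do not anticipate any substantive obstacle beyond assembling these three ingredients.
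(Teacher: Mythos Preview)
Your proposal is correct and follows essentially the same argument as the paper: use Theorem~\ref{dort} for pairwise orthogonality, Theorem~\ref{ilk} for the dimension of each summand, and then match the total $M(\setsize{A}-1)=\setsize{A}^{N}-1$ against $\dim\sett{V}{S}$ from Section~\ref{hsjpmfs}. The only difference is that you spell out explicitly why pairwise orthogonality yields a direct sum with additive dimensions, which the paper takes for granted.
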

\begin{proof}
The orthogonal direct sum of subspaces is a subspace. Hence, 
the right hand side of the equation above is a subspace of $\sett{V}{S}$ and
its dimension is given as
\begin{equation}
  \dim \bigoplus_{i=1}^{M}\setts{V}{S}{\vect{a}_i}=\sum_{i=1}^{M}\dim \setts{V}{S}{\vect{a}_i} =
  \setsize{A}^{N}-1 
\end{equation}
due to Theorem \ref{ilk}. As explained in Section \ref{hsjpmfs} the dimension of the 
$\sett{V}{S}$ is also $\setsize{A}^{N}-1$.  Consequently, 
$\sett{V}{S}=\bigoplus_{i=1}^{M}\setts{V}{S}{\vect{a}_i}$.
\end{proof}

This theorem has important consequences. Any joint PMF $p(\vect{x})$ can be projected 
onto the subspaces $\setts{V}{S}{\vect{a}_i}$s by using the inner product.  Theorem \ref{last}
states that the \emph{vector summation} of these projections is equal to the original joint PMF. 
In other words
\begin{eqnarray}
p(\vect{x})&=&p_{\vect{a}_1}(\vect{x})\boxplus p_{\vect{a}_2}(\vect{x})  \boxplus 
\ldots \boxplus p_{\vect{a}_M}(\vect{x})\nonumber \\
&=&\frac{1}{Z}\prod_{i=1}^{M}p_{\vect{a}_i}(\vect{x})
\end{eqnarray}    
where the last line follows from the definition of the $\boxplus$ operation and $p_{\vect{a}_i}(\vect{x})$
denotes the projection of $p(\vect{x})$ onto the subspace $\setts{V}{S}{\vect{a}_i}$. These 
projections can be calculated by 
\begin{equation}
  p_{\vect{a}_i}(\vect{x})=\sum_{i=1}^{\setsize{A}-1}\innerproduct{p(\vect{x})}
  {\psi_{ij}(\vect{x})}\boxdot\psi_{ij}(\vect{x}) \textrm{,}
\end{equation} 
where $\psi_{ij}(\vect{x})$ denotes the $j^{th}$ orthonormal basis PMF of the $i^{th}$ subspace. 
Moreover, since $p_{\vect{a}_i}(\vect{x})$s are SPCIs we can write $p(x)$ as
\begin{equation}
  p(\vect{x})= \frac{1}{Z}\prod_{i=1}^{M}q_i(\vect{a}_i\vect{x}^{T}) \textrm{,}\label{SPCIfactorization}
\end{equation}
where all scaling coefficients are merged in $Z$ and 
$q_i(\vect{a}_i\vect{x})=\setsize{A}^{N-1}p_{\vect{a}_i}(\vect{x})$.

\begin{ex}
Consider the $p_2(x_1,x_2)$ given in Example 1. It can be factorized as
\begin{equation}
p_2(x_1,x_2)=\frac{1}{Z}q_1(x_1)q_2(x_2)q_3(x_1+x_2)q_4(x_1+2x_2) \nonumber
\end{equation}
where $q_1(x)=\frac{1}{10}[6\ 3\ 1]$, $q_2(x)=\frac{1}{3}[1\ 1\ 1]$, $q_3(x)=\frac{1}{6}[4\ 1 \  1]$, 
and $q_4(x)=\frac{1}{10}[6 \ 1 \ 3]$. Actually, we can omit writing $q_2(x_2)$ since it is a
constant.
\end{ex}
\subsection{Parity Check Interactions}

Any SPCI can be transformed into usual parity check factor function, which is nothing
but an indicator function, by employing an auxiliary variable in $\GF(\setsize{A})$ 
as follows.
\begin{equation}
\frac{q(\vect{a}\vect{x}^{T})}{\setsize{A}^{N-1}}=\frac{1}{\setsize{A}^{N-1}}
  \sum_{\forall u \in \GF(\setsize{A})} I(\vect{a}\vect{x}^{T}-u)q(u) \textrm{,}
\label{transformers}
\end{equation}
where $I(x)$ is the indicator function and its value is one if $x=0$ and zero otherwise.
This transformation allows expressing any joint PMF as a product of parity check factors and
factors of degree one.

$N$ of the parity check coefficient vectors of the SPCIs in (\ref{SPCIfactorization}) can be 
selected as the $N$ canonical basis vectors of $\GF^{N}(\setsize{A})$.
 Then the product in (\ref{SPCIfactorization}) can be grouped as
\begin{equation}
  p(\vect{x})=\frac{1}{Z}\prod_{i=1}^{N}q_i(x_i) \prod_{i=N+1}^{M} q_i(\vect{a}_i\vect{x}^{T})\textrm{.}
\end{equation}
The second product above can be transformed into parity check constraints 
using (\ref{transformers}) as follows.
\begin{equation}
  p(\vect{x})=\frac{1}{Z}\prod_{i=1}^{N}q_i(x_i) \sum_{\forall \vect{u}}
  \prod_{i=N+1}^{M}I(\vect{a}_i\vect{x}^{T}-u_{i-N})q_i(u_{i-N}) \nonumber\textrm{,}
\end{equation}
where $\vect{u}$ denotes $(u_1,u_2,\ldots,u_{M-N})$. Let $r(\vect{x},\vect{u})$
be a PMF defined over $\GF^{M}(\setsize{A})$ as follows.
\begin{equation}
\begin{split}
  r(\vect{x},\vect{u})=&\frac{1}{Z}\left(\prod_{i=1}^{N}q_i(x_i) \right)
  \left(\prod_{i=N+1}^{M}q_i(u_{i-N}) \right)
  \\ &\cdot \left(\prod_{i=N+1}^{M}I(\vect{a}_i\vect{x}^{T}-u_{i-N})\right)
\end{split} \label{theequation}
\end{equation}

Clearly, $p(\vect{x})=\sum_{\forall \vect{u}} r(\vect{x},\vect{u})$. Hence, $r(\vect{x},\vect{u})$
carries all the information that $p(\vect{x})$ has for $x_i$'s. As (\ref{theequation}) 
displays, $r(\vect{x},\vect{u})$
 can be expressed as a product of parity check factors and factors of degree one
which was our goal. 
Note that this factorization can be represented by  a  Tanner graph.

\begin{figure}
\begin{center}
\includegraphics[scale=.45]{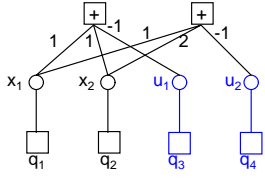}
\caption{Tanner graph of $p_2(x_1,x_2)$ given in Examples 1,2, and 3.\label{tannerex}}
\end{center}
\end{figure}

\begin{ex} The Tanner graph of $p_2(x_1,x_2)$ in the previous examples is shown 
in Figure \ref{tannerex} which represents the following factorization. 
\begin{equation}
\begin{split}
r(x_1,x_2,u_1,u_2) = &I(x_1+x_2-u_1)I(x_1+2x_2-u_2)\\  & \cdot q_1(x_1)q_2(x_2)q_3(u_1)q_4(u_2) \textrm{.}
\end{split} \nonumber 
\end{equation}
\end{ex}   

\section{Universal Marginalization Machine }

The third product in (\ref{theequation}) represents parity check constraints 
imposed by a linear code. The value of this product evaluates as
\begin{equation}
  \prod_{i=N+1}^{M}I(\vect{a}_i\vect{x}^{T}-u_{i-N})= 
  \left\{ \begin{array}{ll}
      1\textrm{,} & \vect{H}\left[\vect{x}\ \vect{u} \right]^{T}=0 \\
      0\textrm{,} & \textrm{otherwise}

    \end{array}\right. \nonumber
\end{equation} 
where the matrix $\vect{H}$ is
\begin{equation}
  \vect{H} = \begin{bmatrix}
\vect{a}_{N+1}& -1 &0 &\cdots &0 \\
\vect{a}_{N+2}& 0 & -1 &\cdots & 0 \\
\vdots &  \vdots  &\vdots &  \ddots  &  \vdots \\
\vect{a}_{M} & 0 & 0 & \cdots & -1
\end{bmatrix} = \begin{bmatrix} \vect{P} & -\vect{I} \end{bmatrix}\textrm{.}
\end{equation}   

The generator matrix $\vect{G}$ of this code is $\left[\vect{I} \ \vect{P}^{T} \right]$.
Remember that the vectors $\vect{a}_{N+1},\vect{a}_{N+2},\ldots,\vect{a}_{M}$ were all pairwise 
linearly independent. Moreover, these vectors
are also linearly independent with the columns of the identity matrix, since 
the weights of these vectors are two or more. Hence, all columns
of $\vect{G}$ are pairwise linearly independent, which means that $\vect{G}$ is the parity 
check matrix of a Hamming code. Therefore, $\vect{H}$ is the parity check matrix of 
the $(\frac{\setsize{A}^{N}-1}{\setsize{A}-1},N)$ \emph{the dual Hamming code}.

If a soft decoder for this code existed, which gives the exact marginal  a posteriori 
PMFs for 
each code symbol, then this soft decoder can be utilized to compute the marginal PMFs
of $N$ random variables having \emph{any} joint PMF. 
Hence, we call such a soft detector as the universal marginalization machine (UMM).
The UMM can be configured to marginalize a joint PMF by 
applying certain $q_i(x_i)$'s and $q_i(u_{i-N})$'s as inputs to the UMM. 

This approach shows that the marginalization sum, which is the 
central part of the many communication problems, can be handled by a soft decoder. 
This is an important result in a practical point of view, since
 soft decoders can be approximated by analog VLSI structures \cite{loeligeranalog}. 
For instance an analog equalizer can be implemented in this way. 



\section{Conclusion and Future Directions}

In this paper we have presented a method for factorizing  joint 
PMFs into parity check factors. This factorization allows 
 marginalizing  a joint PMF
by the soft decoder of the dual Hamming code if a Galois field exists 
in the order of the alphabet size of the PMF.

This work may be continued by extending the idea to the alphabet sizes
for which a Galois field does not exist. Another interesting topic to 
work on might be employing the fast Fourier transform algorithm 
for obtaining the projections.  



\end{document}